\newtheorem{theorem}{Theorem}[section]
\newtheorem{remark}{Remark}[section]
\newtheorem{lemma}[theorem]{Lemma}
\newtheorem{definition}{Definition}[section]
\newtheorem{proposition}[theorem]{Proposition}
\def\labelenumi{\theenumi}
\numberwithin{equation}{section}
\begin{document}

\title[Hotelling model]
{Necessary and sufficient condition for equilibrium of the Hotelling model on a circle}
%\subtitle{}
\author{Satoshi Hayashi${}^{\dagger}$}
\address{${}^{\dagger\ddagger}$Department of Mathematics Education, 
	Faculty of Education, Gifu University, 1-1 Yanagido, Gifu
	Gifu 501-1193 Japan.}
\email{${}^{\dagger}$x1131023@edu.gifu-u.ac.jp}

\author{Naoki Tsuge${}^{\ddagger}$}

\email{${}^{\ddagger}$tuge@gifu-u.ac.jp}
\thanks{
	N. Tsuge's research is partially supported by Grant-in-Aid for Scientific 
	Research (C) 17K05315, Japan.
}

\keywords{Competitive location problem, The Hotelling model on a circle, Equilibrium, Mathematical formulation, 
	Necessary and sufficient condition.}

\date{}

\maketitle

\begin{abstract}
We are concerned with a model of vendors competing to sell a homogeneous product to customers spread evenly along a circular city. 
This model is based on Hotelling's celebrated paper in 1929.  Our aim in this paper is to show a necessary and sufficient condition for the equilibrium, 
which describes geometric properties of the equilibrium. To achieve this, we first formulate the model mathematically. 
Next, we prove that the condition holds if and only if vendors are equilibrium. 
\end{abstract}

%\tableofcontents

\section{Introduction}
We study a model in which a circular city lies on a circle with circumference length $1$ and 
customers are uniformly distributed with density $1$ along this circle.  
We consider $n$ vendors moving on this circle. 
The model governs a competition between vendors on a simple closed curve.
Let the location of 
the vendor $k\quad(k=1,2,3,\ldots,n)$ be $x_k\in[0,1]$.  We assume that $x_1\leq x_2\leq \cdots\leq x_n$ and  denote the location of $n$ vendors $(x_1,x_2,x_3,\ldots,x_n)$ by ${\bm x}$. Since we study the 
competition between vendors, we consider $n\geq2$ in particular.
The price of one unit of product for each vendor is 
identical. Moreover, we assume the following. 
\vspace*{1ex}

If there exist $l\quad(l=1,2,3,\ldots,n)$ vendors nearest to 
a customer, the customer  
 purchases $1/l$ unit of product per unit of time from each of the $l$ vendors respectively. \vspace*{-1ex}

Every vendor then seeks a location to maximize his profit.

We then represent the profit of  vendor $k$ per unit of time by a mathematical notation.
We first denote a distance on the circle between $x,y\in[0,1]$ by 
$d(x,y)=\min\{|x-y+1|,|x-y|,|x-y-1|\}$. 
\begin{remark}We regard $x,y\in[0,1]$ as points on a circle with circumference length $1$. Then 
	$d(x,y)$ represents the length of bold-faced arcs in the following figure. 
\begin{center}
	\scalebox{0.6}{%WinTpicVersion4.32a
{\unitlength 0.1in%
\begin{picture}(77.7000,18.6500)(0.3000,-20.0000)%
% CIRCLE 2 0 3 0 Black White  
% 4 1000 1200 1000 400 1000 400 1000 400
% 
\special{pn 8}%
\special{ar 1000 1200 800 800 0.0000000 6.2831853}%
% CIRCLE 2 0 3 0 Black White  
% 4 3000 1200 3000 400 3000 400 3000 400
% 
\special{pn 8}%
\special{ar 3000 1200 800 800 0.0000000 6.2831853}%
% CIRCLE 2 0 3 0 Black White  
% 4 5000 1200 5000 400 5000 400 5000 400
% 
\special{pn 8}%
\special{ar 5000 1200 800 800 0.0000000 6.2831853}%
% CIRCLE 2 0 3 0 Black White  
% 4 7000 1200 7000 400 7000 400 7000 400
% 
\special{pn 8}%
\special{ar 7000 1200 800 800 0.0000000 6.2831853}%
% LINE 1 0 3 0 Black White  
% 8 1000 300 1000 500 3000 500 3000 300 5000 300 5000 500 7000 500 7000 300
% 
\special{pn 13}%
\special{pa 1000 300}%
\special{pa 1000 500}%
\special{fp}%
\special{pa 3000 500}%
\special{pa 3000 300}%
\special{fp}%
\special{pa 5000 300}%
\special{pa 5000 500}%
\special{fp}%
\special{pa 7000 500}%
\special{pa 7000 300}%
\special{fp}%
% STR 2 0 3 0 Black White  
% 4 1000 100 1000 200 5 0 0 0
% $0(=1)$
\put(10.0000,-2.0000){\makebox(0,0){$0(=1)$}}%
% STR 2 0 3 0 Black White  
% 4 3000 100 3000 200 5 0 0 0
% $0(=1)$
\put(30.0000,-2.0000){\makebox(0,0){$0(=1)$}}%
% STR 2 0 3 0 Black White  
% 4 5000 100 5000 200 5 0 0 0
% $0(=1)$
\put(50.0000,-2.0000){\makebox(0,0){$0(=1)$}}%
% STR 2 0 3 0 Black White  
% 4 7000 100 7000 200 5 0 0 0
% $0(=1)$
\put(70.0000,-2.0000){\makebox(0,0){$0(=1)$}}%
% DOT 0 0 3 0 Black White  
% 1 470 600
% 
\special{pn 4}%
\special{sh 1}%
\special{ar 470 600 16 16 0 6.2831853}%
% DOT 0 0 3 0 Black White  
% 1 230 1400
% 
\special{pn 4}%
\special{sh 1}%
\special{ar 230 1400 16 16 0 6.2831853}%
% DOT 0 0 3 0 Black White  
% 1 2470 600
% 
\special{pn 4}%
\special{sh 1}%
\special{ar 2470 600 16 16 0 6.2831853}%
% DOT 0 0 3 0 Black White  
% 1 3600 670
% 
\special{pn 4}%
\special{sh 1}%
\special{ar 3600 670 16 16 0 6.2831853}%
% DOT 0 0 3 0 Black White  
% 1 4420 650
% 
\special{pn 4}%
\special{sh 1}%
\special{ar 4420 650 16 16 0 6.2831853}%
% DOT 0 0 3 0 Black White  
% 1 4230 1400
% 
\special{pn 4}%
\special{sh 1}%
\special{ar 4230 1400 16 16 0 6.2831853}%
% DOT 0 0 3 0 Black White  
% 1 6470 600
% 
\special{pn 4}%
\special{sh 1}%
\special{ar 6470 600 16 16 0 6.2831853}%
% DOT 0 0 3 0 Black White  
% 1 7690 800
% 
\special{pn 4}%
\special{sh 1}%
\special{ar 7690 800 16 16 0 6.2831853}%
% STR 2 0 3 0 Black White  
% 4 400 410 400 510 5 0 0 0
% $x$
\put(4.0000,-5.1000){\makebox(0,0){$x$}}%
% STR 2 0 3 0 Black White  
% 4 120 1420 120 1520 5 0 0 0
% $y$
\put(1.2000,-15.2000){\makebox(0,0){$y$}}%
% STR 2 0 3 0 Black White  
% 4 2390 410 2390 510 5 0 0 0
% $x$
\put(23.9000,-5.1000){\makebox(0,0){$x$}}%
% STR 2 0 3 0 Black White  
% 4 3690 460 3690 560 5 0 0 0
% $y$
\put(36.9000,-5.6000){\makebox(0,0){$y$}}%
% STR 2 0 3 0 Black White  
% 4 4330 450 4330 550 5 0 0 0
% $y$
\put(43.3000,-5.5000){\makebox(0,0){$y$}}%
% STR 2 0 3 0 Black White  
% 4 4100 1420 4100 1520 5 0 0 0
% $x$
\put(41.0000,-15.2000){\makebox(0,0){$x$}}%
% STR 2 0 3 0 Black White  
% 4 6370 400 6370 500 5 0 0 0
% $y$
\put(63.7000,-5.0000){\makebox(0,0){$y$}}%
% STR 2 0 3 0 Black White  
% 4 7810 580 7810 680 5 0 0 0
% $x$
\put(78.1000,-6.8000){\makebox(0,0){$x$}}%
% CIRCLE 0 0 3 0 Black White  
% 4 1000 1200 200 1200 120 200 100 1430
% 
\special{pn 20}%
\special{ar 1000 1200 800 800 2.8913921 3.9907341}%
% CIRCLE 0 0 3 0 Black White  
% 4 3000 1200 3800 1200 3920 380 2230 340
% 
\special{pn 20}%
\special{ar 3000 1200 800 800 3.9821495 5.5551953}%
% CIRCLE 0 0 3 0 Black White  
% 4 5000 1200 4200 1200 4130 380 4040 1450
% 
\special{pn 20}%
\special{ar 5000 1200 800 800 2.8868344 3.8974136}%
% CIRCLE 0 0 3 0 Black White  
% 4 7000 1200 6200 1200 8410 380 6200 300
% 
\special{pn 20}%
\special{ar 7000 1200 800 800 3.9857466 5.7564348}%
\end{picture}}%}
	%    \caption{lem2.1}
	\label{define}
\end{center}\vspace{5mm}	
That is, $d(x,y)$ is the shortest distance between $x$ and $y$ on the circle.
\end{remark}
Given a vector ${\bm \xi}=(\xi_1,\xi_2,\ldots,\xi_n)\in[0,1]^n$ and $0\leq y\leq1$, 
we then define a set $\displaystyle S({\bm \xi},y)=\{j\in\{1,2,3,\ldots,n\}:d(\xi_j,y)=\min_{i}d(\xi_i,y)\}$.
By using a density function 
\begin{align*}
\rho_k({\bm \xi},y)=
\begin{cases}
\displaystyle 0&\displaystyle \text{if}\quad   d(\xi_k,y)>\min_{i}d(\xi_i,y)\\
\displaystyle \frac{1}{|S({\bm \xi},y)|}&\displaystyle \text{if}\quad d(\xi_k,y)=\min_{i}d(\xi_i,y)
\end{cases},
\end{align*}
we define 
\begin{align*}
f_k({\bm \xi})=\int^1_0\rho_k({\bm \xi},y)dy,
\end{align*}
where $|S({\bm \xi},y)|$ represents a number of elements in a set $S({\bm \xi},y)$. We call $f_k({\bm x})$ the profit of vendor $k$ per unit of time for a location ${\bm x}$. We then define equilibrium as follows.
\begin{definition}
A location ${\bm x}^*=(x^*_1,x^*_2,x^*_3,\ldots,x^*_n)\in[0,1]^n$ is called equilibrium, if 
\begin{align*}
f_k({\bm x}^*)\geq f_k(x^*_1,x^*_2,x^*_3,\ldots,x^*_{k-1},x_k,x^*_{k+1},\ldots,x^*_n)
\end{align*}holds for any $k\in\{1,2,3,\ldots,n\}$ and $x_k\in[0,1]$.
\end{definition}

We review the known results. The present model is based on the Hotelling's pioneer work \cite{H}. Although we 
consider the circle as a product space, Hotelling dealed with a finite line.  In \cite{HT},  authors show
a necessary and sufficient condition for the equilibrium of $n$ vendors on the line.

On the 
other hand,  our model was introduced by Eaton and Lipsey \cite{EL2}, who discussed with the existence of equilibrium for the model
without the price. Subsequently, taking the price  into account, Salop \cite{S} studied the model for two vendors. 
In this paper, we are concerned with $n$ vendors on the circle and investigate their equilibrium.
Our goal  is to present a necessary and sufficient condition for the equilibrium.

For convenience, we set $x_1=0,\;x_0=x_n,\;x_{-1}=x_{n-1},\;x_{-2}=x_{n-2},\;x_{n+1}=x_1,\;x_{n+2}=x_2$ and denote a interval $[x_k,x_{k+1}]$ by $I_k\;\;(k=-2,-1,0,1,\ldots,n+1)$. Then our main theorem is as follows.
\begin{theorem}\label{thm:main}${}$

	${\bm x}$ is equilibrium, if and only if the following condition \eqref{eqn:main1} hold.
	\begin{align}
	\text{We define a set }|\textcolor{black}{\bar{I}}|=\max_{l\in{\{1,2,\cdots,\textcolor{black}{n}\}}}|I_l|\nonumber.\\
	|I_k|+|I_{k+1}|\geq|\bar{I}|\quad(1\leq k\leq n)\label{eqn:main1},
	\end{align}
where $|I|$ represents the length of an interval $I$.
\end{theorem}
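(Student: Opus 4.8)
The plan is to reduce the equilibrium condition to a pointwise comparison between each vendor's current profit and the best profit it could obtain by relocating, and then to read \eqref{eqn:main1} off of that comparison. Regarding the $x_i$ as points on the circle, the first step is to record the geometry of $f_k$. For an \emph{isolated} vendor $k$ (one whose location differs from all others) the customers counted by $\rho_k$ are exactly those on the arc running from the midpoint of $I_{k-1}$ to the midpoint of $I_k$, so a direct evaluation of $\int_0^1\rho_k(\bm x,y)\,dy$ gives
\begin{align*}
f_k(\bm x)=\tfrac12\bigl(|I_{k-1}|+|I_k|\bigr).
\end{align*}
When $t$ vendors share a location, the combined arc they serve is still bounded by the midpoints to the nearest distinct neighbours on each side, and $\rho_k$ splits it into $t$ equal shares; I would state this shared formula separately.

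The second step is the best-response computation. Freezing the other $n-1$ vendors and writing $y$ for the position of vendor $k$, the map $y\mapsto f_k(x_1,\dots,x_{k-1},y,x_{k+1},\dots,x_n)$ is piecewise constant: when $y$ lies in the interior of a gap between two consecutive distinct positions of the remaining vendors, a midpoint computation shows vendor $k$ captures \emph{exactly one half} of that gap, independently of where in the gap it sits; when $y$ coincides with other vendors the value only drops because the arc is then shared. The key observation is that deleting vendor $k$ merges $I_{k-1}$ and $I_k$ into a single gap of length $|I_{k-1}|+|I_k|$ while leaving every other gap unchanged, and that this holds whether or not $k$ was tied. Hence
\begin{align*}
\sup_{y}f_k(x_1,\dots,y,\dots,x_n)=\tfrac12\max\bigl\{\,|I_{k-1}|+|I_k|,\ |\bar{I}|\,\bigr\},
\end{align*}
the supremum being attained in the interior of the largest available gap. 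Here I would verify $\max\{|I_{k-1}|+|I_k|,\max_{m\neq k-1,k}|I_m|\}=\max\{|I_{k-1}|+|I_k|,|\bar{I}|\}$ by splitting on whether the globally largest interval is adjacent to vendor $k$.

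Combining the two steps, and using that the supremum is attained, equilibrium is equivalent to the requirement that for every $k$ the current profit meet this maximum, i.e. $f_k(\bm x)\ge\tfrac12\max\{|I_{k-1}|+|I_k|,|\bar{I}|\}$. For an isolated vendor this reads $\tfrac12(|I_{k-1}|+|I_k|)\ge\tfrac12\max\{|I_{k-1}|+|I_k|,|\bar{I}|\}$, which simplifies to $|I_{k-1}|+|I_k|\ge|\bar{I}|$; collecting this over all $k$ is exactly \eqref{eqn:main1} after the cyclic reindexing $k\mapsto k+1$. I would then dispose of ties by checking consistency at both ends: a coincidence of three or more vendors forces two adjacent zero-length intervals and hence violates \eqref{eqn:main1}, and it is genuinely not an equilibrium since the shared profit of an interior member, at most $|\bar{I}|/t$ with $t\ge 3$, lies strictly below the relocation value $\tfrac12|\bar{I}|$; a coincidence of exactly two vendors is an equilibrium precisely when its two flanking intervals both have length $|\bar{I}|$, which is exactly what \eqref{eqn:main1} imposes at the affected indices.

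The main obstacle I anticipate is the bookkeeping around coincident vendors: both the current-profit formula and the best-response supremum carry the correct sharing factor, and one must confirm that the degenerate configurations allowed by \eqref{eqn:main1} really are equilibria while all others are excluded. The interior arc computation and the interval-merging identity are routine; the care lies in showing that no tie yields a deviation beyond $\tfrac12|\bar{I}|$ and that the boundary equality cases align exactly with \eqref{eqn:main1}, so that the single chain of equivalences delivers both directions of the theorem at once.
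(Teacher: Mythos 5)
Your proposal is correct, and it is built from the same two computational facts as the paper --- the shared-profit formula $f_k=\tfrac{1}{2t}(L+R)$ for a stack of $t$ vendors flanked by gaps $L,R$ (the paper's Lemma 2.1) and the observation that a vendor relocating into the interior of a gap of the remaining $n-1$ vendors captures exactly half of that gap --- but you organize them into a genuinely different argument. The paper proves the two implications separately: for sufficiency it first shows $f_k({\bm x})\ge\tfrac12|\bar{I}|$ under \eqref{eqn:main1} and then bounds each type of move (into an adjacent gap, into a non-adjacent gap, onto another vendor) one by one; for necessity it exhibits an explicit profitable deviation into a largest interval; and stacks of three or more vendors are excluded by a standalone Proposition. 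You instead compress all of this into the single closed-form best-response identity
$\sup_{y}f_k(x_1,\dots,y,\dots,x_n)=\tfrac12\max\{|I_{k-1}|+|I_k|,\,|\bar{I}|\}$,
derived from the interval-merging observation (deleting vendor $k$ fuses $I_{k-1}$ and $I_k$ and leaves all other gaps intact --- a fact only implicit in the paper's case (i)(a)), and then read both directions of the theorem off one chain of equivalences. What this buys is uniformity and transparency of the equality cases: it makes visible at once why an isolated vendor's condition is exactly $|I_{k-1}|+|I_k|\ge|\bar{I}|$ and why a two-vendor stack is tolerated precisely when both flanking intervals have length $|\bar{I}|$, which is what \eqref{eqn:main1} imposes at the two affected indices. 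What the paper's route buys is that every deviation is checked by an elementary explicit estimate, and the degenerate case $n=2$ (where the "other gaps" are vacuous) is handled separately rather than absorbed into the general formula. Your treatment of ties (a stack of $t\ge3$ earns at most $|\bar{I}|/t<\tfrac12|\bar{I}|$ and also creates two adjacent zero-length intervals violating \eqref{eqn:main1}) reproduces the content of the paper's Proposition 2.1 and case (ii) of its necessity argument, so no gap remains.
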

\begin{remark}\normalfont
From \eqref{eqn:main1}, we notice that any location becomes  equilibrium for $n=2$.
On the other hand, we find that equilibrium  for $n=3$ is $\displaystyle {\bm x^\ast}=\left(0,\frac12,\frac12\right),\left(0,0,\frac12\right)$.
\end{remark}

\section{Preliminary}
In this section, we prepare some lemmas and a proposition to prove our main theorem in a next section. 
We first consider the profit of $i$ vendors which locate at one point. We have the following lemma. 
\begin{lemma}
	We consider a location ${\bm x}=(x_1,x_2,x_3,\ldots,x_n)$ with $0=x_1\leq x_2\leq x_3\leq\dots\leq x_n$.
	We assume that $x_l<x_{l+1}=\cdots = x_k = \cdots = x_{l+i}<x_{l+i+1}\quad(n\geq2,\;l\geq0,\;l+i+1\leq n+1,\;1\leq i\leq n)$.

    \begin{align*}
    \displaystyle f_k({\bm x})=\frac1{2i}(|I_l|+|I_{l+i}|).
    \end{align*}

\end{lemma}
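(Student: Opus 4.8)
The plan is to compute $f_k({\bm x}) = \int_0^1 \rho_k({\bm x}, y)\,dy$ directly by identifying the set of customers $y$ for which $\rho_k({\bm x},y) > 0$ together with the constant value it takes there. Write $p := x_{l+1} = \cdots = x_{l+i}$ for the common location of the $i$ coincident vendors, so that by the stated convention $I_l = [x_l, p]$ and $I_{l+i} = [p, x_{l+i+1}]$. The first observation is that since $d(x_{l+j}, y) = d(p, y)$ for every $j \in \{1,\dots,i\}$ and every $y$, the indices $l+1, \dots, l+i$ either all belong to $S({\bm x}, y)$ or none of them does. Consequently $k \in S({\bm x}, y)$ precisely when $p$ realizes $\min_m d(x_m, y)$, and on this set $|S({\bm x}, y)| \geq i$, with equality as soon as no other distinct vendor location ties $p$.

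Next I would pin down the arc $A := \{y : d(p, y) = \min_m d(x_m, y)\}$ served by the group at $p$. Because the nearest occupied locations flanking $p$ are $x_l$ on one side and $x_{l+i+1}$ on the other, a point $y$ lying in $I_l$ is closer to $p$ than to $x_l$ exactly when it is within $|I_l|/2$ of $p$, i.e.\ up to the midpoint $m_L$ of $I_l$; symmetrically, a point in $I_{l+i}$ is closer to $p$ than to $x_{l+i+1}$ up to the midpoint $m_R$ of $I_{l+i}$. Every other vendor is farther from this arc than $x_l$ or $x_{l+i+1}$, so these midpoints are the only boundaries. Hence $A$ is the arc running from $m_L$ through $p$ to $m_R$, of total length $|A| = \tfrac{1}{2}|I_l| + \tfrac{1}{2}|I_{l+i}|$.

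On the interior of $A$ the minimum distance is attained by $p$ alone, so $S({\bm x}, y) = \{l+1, \dots, l+i\}$, giving $|S({\bm x}, y)| = i$ and $\rho_k({\bm x}, y) = 1/i$; the two endpoints $m_L, m_R$ (and the finitely many points where three or more locations tie) form a null set that does not affect the integral. Integrating,
\begin{align*}
f_k({\bm x}) = \int_A \frac{1}{i}\,dy = \frac{|A|}{i} = \frac{1}{2i}\bigl(|I_l| + |I_{l+i}|\bigr),
\end{align*}
which is the claimed identity.

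The step I expect to be the main obstacle is the careful justification that the boundaries of $A$ are exactly the two midpoints $m_L, m_R$. This requires handling the circular distance $d$ --- in particular checking that no far-away vendor becomes closer through the wrap-around --- and dealing with the boundary index conventions ($x_0 = x_n$, $x_{-1} = x_{n-1}$, $x_{n+1} = x_1$, $x_{n+2} = x_2$) in the extreme cases $l = 0$ and $l + i + 1 = n+1$, where the flanking location sits on the far side of the point $0 \equiv 1$. Once the geometry of $A$ is secured, the remainder is the routine integral displayed above.
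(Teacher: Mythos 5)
Your argument is correct and follows essentially the same route as the paper: the cluster at $p$ captures the arc from the midpoint of $I_l$ to the midpoint of $I_{l+i}$, of length $\tfrac12(|I_l|+|I_{l+i}|)$, shared equally among the $i$ coincident vendors. The paper's proof is just this one-line computation accompanied by a figure, so your version is the same idea carried out with the integral definition of $f_k$ made explicit.
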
\begin{proof}${}$

%\begin{figure}[ht]
%	\vspace*{-2ex}
%   	\includegraphics[width=10cm]{円上lemma2.1.eps}	\vspace*{-1ex}
% \caption{円上lemma2.1}
% \label{lem2.1(i)}
%\end{figure}
\begin{center}
	\scalebox{1.0}{%WinTpicVersion4.32a
{\unitlength 0.1in%
\begin{picture}(23.8000,25.1000)(0.2000,-26.4500)%
% CIRCLE 2 0 3 0 Black White  
% 4 1400 1400 1400 400 1400 400 1400 400
% 
\special{pn 8}%
\special{ar 1400 1400 1000 1000 0.0000000 6.2831853}%
% LINE 1 0 3 0 Black White  
% 2 1400 300 1400 500
% 
\special{pn 13}%
\special{pa 1400 300}%
\special{pa 1400 500}%
\special{fp}%
% STR 2 0 3 0 Black White  
% 4 1400 100 1400 200 5 0 0 0
% $0(=1)$
\put(14.0000,-2.0000){\makebox(0,0){{\footnotesize $0(=1)$}}}%
% STR 2 0 3 0 Black White  
% 4 200 1300 200 1400 5 0 0 0
% $x_{l}$
\put(2.0000,-14.0000){\makebox(0,0){{\footnotesize $x_{l}$}}}%
% STR 2 0 3 0 Black White  
% 4 820 2280 820 2380 5 0 0 0
% $x_{k}$
\put(8.2000,-23.8000){\makebox(0,0){{\footnotesize $x_{k}$}}}%
% STR 2 0 3 0 Black White  
% 4 1800 2380 1800 2480 5 0 0 0
% $x_{l+i+1}$
\put(18.0000,-24.8000){\makebox(0,0){{\footnotesize $x_{l+i+1}$}}}%
% DOT 0 0 3 0 Black White  
% 1 1800 2320
% 
\special{pn 4}%
\special{sh 1}%
\special{ar 1800 2320 16 16 0 6.2831853}%
% DOT 0 0 3 0 Black White  
% 1 400 1400
% 
\special{pn 4}%
\special{sh 1}%
\special{ar 400 1400 16 16 0 6.2831853}%
% DOT 0 0 3 0 Black White  
% 1 800 2200
% 
\special{pn 4}%
\special{sh 1}%
\special{ar 800 2200 16 16 0 6.2831853}%
% STR 2 0 3 0 Black White  
% 4 800 2420 800 2520 5 0 0 0
% $\vdots$
\put(8.0000,-25.2000){\makebox(0,0){$\vdots$}}%
% STR 2 0 3 0 Black White  
% 4 800 2610 800 2710 5 0 0 0
% $i$vendors
\put(8.0000,-27.1000){\makebox(0,0){{\footnotesize $i$vendors}}}%
\end{picture}}%}
%    \caption{lem2.1}
    \label{lem2.1}
\end{center}\vspace{5mm}

%  x_lとx_kの垂直2等分線は，弧x_lx_kも2等分する。したがって，売上を計算する際にも，弧の長さを2等分した地点で各店に売上を振り分ける。

We have $f_k({\bm x})=\dfrac{1}{i}\left(\dfrac{1}{2}|I_l|+\dfrac{1}{2}|I_{l+i}|\right)=\dfrac{1}{2i}\left(|I_l|+|I_{l+i}|\right)$.

\end{proof}

Next, the following proposition play an important role.

\begin{proposition}\label{pro:1}
	If the location  of $n$ vendors ${\bm x}=(x_1,x_2,x_3,\ldots,x_n)$ with $0=x_1\leq x_2\leq x_3\leq\dots\leq x_n\quad(n\geq2)$ is equilibrium,  
	the following holds.
	\begin{align}
	\text{No more than 2 vendors can occupy a location.}
	\label{eqn:pro1}
	\end{align}
\end{proposition}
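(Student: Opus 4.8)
The plan is to argue by contradiction using Lemma~2.1 together with one explicit profitable deviation. Suppose ${\bm x}$ is equilibrium but that some point is occupied by $i\ge 3$ vendors; after relabelling I may write $x_l<x_{l+1}=\cdots=x_{l+i}=:p<x_{l+i+1}$, so that the cluster is flanked by the intervals $I_l$ and $I_{l+i}$. By Lemma~2.1 each of these $i$ vendors earns exactly $f_k({\bm x})=\frac{1}{2i}\bigl(|I_l|+|I_{l+i}|\bigr)$. The intuition is that this clustered profit is small — it is the shared half of two neighbouring arcs divided among $i$ vendors — whereas a single vendor who steps off the cluster can seize an entire half-arc for himself, and this should overturn equilibrium.

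To make this precise I would move one of the clustered vendors, say vendor $k$, a small distance into the larger of the two adjacent intervals. Without loss of generality assume $|I_{l+i}|\ge|I_l|$ and relocate vendor $k$ from $p$ to $p+\varepsilon$ with $0<\varepsilon<|I_{l+i}|$, keeping every other vendor fixed; this is an admissible unilateral deviation in the sense of the definition of equilibrium. A direct inspection of the new nearest-vendor regions shows that vendor $k$ now captures the sliver $(p+\varepsilon/2,\,p+\varepsilon)$ of width $\varepsilon/2$ on the side of the remaining cluster, together with the segment from $p+\varepsilon$ up to the midpoint toward $x_{l+i+1}$, of width $\tfrac12(|I_{l+i}|-\varepsilon)$. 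Hence the deviation profit is
\begin{align*}
\frac{\varepsilon}{2}+\frac{|I_{l+i}|-\varepsilon}{2}=\frac{|I_{l+i}|}{2},
\end{align*}
independent of $\varepsilon$.

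Equilibrium would force $\tfrac12|I_{l+i}|\le\frac{1}{2i}\bigl(|I_l|+|I_{l+i}|\bigr)$, i.e. $(i-1)|I_{l+i}|\le|I_l|$; but the hypotheses $x_l<x_{l+1}$ and $x_{l+i}<x_{l+i+1}$ give $|I_l|,|I_{l+i}|>0$, and since $i\ge3$ with $|I_l|\le|I_{l+i}|$ we get $(i-1)|I_{l+i}|\ge 2|I_{l+i}|\ge 2|I_l|>|I_l|$, a contradiction. The one configuration this argument does not cover is when all $n$ vendors coincide, since then the strict flanking hypothesis of Lemma~2.1 fails; I would dispose of it separately by noting that each vendor then earns $1/n$ while the same $\varepsilon$-deviation captures half of the whole circle, namely $\tfrac12>\tfrac1n$ for $n\ge3$. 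The only delicate point is the bookkeeping of the captured regions after the deviation — in particular accounting for the near-side sliver of width $\varepsilon/2$ and correctly identifying the wrap-around neighbour in the degenerate case — but once the deviation profit is pinned to $\tfrac12|I_{l+i}|$ the contradiction is immediate.
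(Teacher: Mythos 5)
Your argument is correct and is essentially the paper's own proof: both deduce from Lemma 2.1 that a clustered vendor earns $\frac{1}{2i}(|I_l|+|I_{l+i}|)$ and then exhibit the profitable deviation of stepping into the larger adjacent interval, where the deviator's profit becomes half that interval's length, contradicting equilibrium for $i\geq 3$. Your explicit treatment of the degenerate case where all $n$ vendors coincide (which the paper's flanking hypothesis $x_l<x_{l+1}$ quietly excludes) is a small but welcome addition.
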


\begin{proof}
${}$	
	
	We prove that $ {\bm x} $ is not equilibrium, provided that $i\;(3\leq i\leq n)$ vendors occupy at a point. We assume that $x_l<x_{l+1}=\cdots = x_k = \cdots = x_{l+i}<x_{l+i+1}\quad(l\geq0,\;l+i+1\leq n+1,\;3\leq i\leq n)$.
	Therefore there exist $x_l$ and $x_{l+i+1}$ at least one respectively. We notice 
	that $x_l<x_k<x_{l+i+1}$ and there exists no vendor on $(x_l,x_k)$ and $(x_k,x_{l+i+1})$.\\
%	\includegraphics[width=10cm]{pro22-1.eps}\vspace*{-1ex}
%	\caption{Proposition 2.2 (2.2) (i)}
%	\label{pro2.2(i)}
%        \end{figure}\vspace*{-3ex}\\
    \begin{center}
    	\scalebox{1.0}{%WinTpicVersion4.32a
{\unitlength 0.1in%
\begin{picture}(23.8000,25.1000)(0.2000,-26.4500)%
% CIRCLE 2 0 3 0 Black White  
% 4 1400 1400 1400 400 1400 400 1400 400
% 
\special{pn 8}%
\special{ar 1400 1400 1000 1000 0.0000000 6.2831853}%
% LINE 1 0 3 0 Black White  
% 2 1400 300 1400 500
% 
\special{pn 13}%
\special{pa 1400 300}%
\special{pa 1400 500}%
\special{fp}%
% STR 2 0 3 0 Black White  
% 4 1400 100 1400 200 5 0 0 0
% $0(=1)$
\put(14.0000,-2.0000){\makebox(0,0){{\footnotesize $0(=1)$}}}%
% STR 2 0 3 0 Black White  
% 4 200 1300 200 1400 5 0 0 0
% $x_{l}$
\put(2.0000,-14.0000){\makebox(0,0){{\footnotesize $x_{l}$}}}%
% STR 2 0 3 0 Black White  
% 4 820 2280 820 2380 5 0 0 0
% $x_{k}$
\put(8.2000,-23.8000){\makebox(0,0){{\footnotesize $x_{k}$}}}%
% STR 2 0 3 0 Black White  
% 4 1800 2380 1800 2480 5 0 0 0
% $x_{l+i+1}$
\put(18.0000,-24.8000){\makebox(0,0){{\footnotesize $x_{l+i+1}$}}}%
% DOT 0 0 3 0 Black White  
% 1 1800 2320
% 
\special{pn 4}%
\special{sh 1}%
\special{ar 1800 2320 16 16 0 6.2831853}%
% DOT 0 0 3 0 Black White  
% 1 400 1400
% 
\special{pn 4}%
\special{sh 1}%
\special{ar 400 1400 16 16 0 6.2831853}%
% DOT 0 0 3 0 Black White  
% 1 800 2200
% 
\special{pn 4}%
\special{sh 1}%
\special{ar 800 2200 16 16 0 6.2831853}%
% STR 2 0 3 0 Black White  
% 4 800 2420 800 2520 5 0 0 0
% $\vdots$
\put(8.0000,-25.2000){\makebox(0,0){$\vdots$}}%
% STR 2 0 3 0 Black White  
% 4 800 2610 800 2710 5 0 0 0
% $i$vendors
\put(8.0000,-27.1000){\makebox(0,0){{\footnotesize $i$vendors}}}%
\end{picture}}%}
%    	\caption{pro2.2}
    	\label{pro2.2}
    \end{center}\vspace{5mm}

	If $|I_l|\geq|I_{l+i}|$, we notice that $f_k({\bm x})=\dfrac{1}{2i}(|I_l|+|I_{l+i}|)\leq \dfrac{1}{6}(|I_l|+|I_{l+i}|)\leq\dfrac{1}{6}(|I_l|+|I_l|)=\dfrac{1}{3}|I_l|$.
		Setting $x_k'\in{(x_l,x_{l+1})}$, we have
		$f_k(x_1,\cdots,x_{k-1},x_k',x_{k+1},\cdots,x_n)=\dfrac{1}{2}(|[x_l,x_k']|+|[x_k',x_k]|)
		=\dfrac{1}{2}|I_l|>f_k({\bm x})$.

		For the other case $|I_l|<|I_{l+i}|$, from the symmetry, we can similarly show that 
		there exists $x_k'$ such that $f_k(x_1,\cdots,x_{k-1},x_k',x_{k+1},\cdots,x_n)>f_k({\bm x})$.

        We can complete the proof of \eqref{eqn:pro1}.
\end{proof}

Finally, we compare a location after the movement of a vendor with the original one. To do this, we introduce the following notation.

For a given location  of $n$ vendors ${\bm x}=(x_1,x_2,x_3,\ldots,x_n)$ with $(0=)\;x_1\leq x_2\leq \cdots\leq x_n$, we move vendor $k$ from $x_k$ to any fixed point in 
$A\subset [0,1]$. We denote the resultant location by $x_k\rightarrow A$. We notice that $x_k\rightarrow A$ represents the following vector
\begin{align*}
(x_1,x_2,,\ldots,x_{k-1},x'_k,x_{k+1},\ldots,x_n)
,
\end{align*}
where $x'_k$ is a location of vendor $k$ after movement and $x'_k\in A$.

\section{Proof of Theorem \ref{thm:main}} We are now position to prove our main theorem. We divide the proof into two cases, (I) $n=2$ and (II) $n\geq3$.
\section*{Proof of Theorem \ref{thm:main} {\rm (I)}}
We first treat with the case where $n=2$ and prove that every ${\bm x}$ is equilibrium.
%\begin{figure}[ht]
%	\includegraphics[width=10cm]{2vendorequ.eps}
%	\caption{equilibrium of $2$ vendors}
%	\label{equilibrium of $2$ vendors}
%\end{figure}
\begin{center}
	\scalebox{1.0}{%WinTpicVersion4.32a
{\unitlength 0.1in%
\begin{picture}(22.9500,24.1500)(3.0500,-26.0000)%
% CIRCLE 2 0 3 0 Black White  
% 4 1600 1600 600 1600 600 1600 600 1600
% 
\special{pn 8}%
\special{ar 1600 1600 1000 1000 0.0000000 6.2831853}%
% LINE 1 0 3 0 Black White  
% 2 1600 500 1600 700
% 
\special{pn 13}%
\special{pa 1600 500}%
\special{pa 1600 700}%
\special{fp}%
% STR 2 0 3 0 Black White  
% 4 1600 300 1600 400 5 0 0 0
% \footnotesize $0(=1)$
\put(16.0000,-4.0000){\makebox(0,0){\footnotesize $0(=1)$}}%
% STR 2 0 3 0 Black White  
% 4 1600 300 1600 400 2 0 0 0
% 
\put(16.0000,-4.0000){\makebox(0,0)[lb]{}}%
% STR 2 0 3 0 Black White  
% 4 1600 150 1600 250 5 0 0 0
% \footnotesize $x_1$
\put(16.0000,-2.5000){\makebox(0,0){\footnotesize $x_1$}}%
% DOT 0 0 3 0 Black White  
% 1 1600 600
% 
\special{pn 4}%
\special{sh 1}%
\special{ar 1600 600 16 16 0 6.2831853}%
% DOT 0 0 3 0 Black White  
% 1 800 2200
% 
\special{pn 4}%
\special{sh 1}%
\special{ar 800 2200 16 16 0 6.2831853}%
% STR 2 0 3 0 Black White  
% 4 800 2250 800 2350 5 0 0 0
% \footnotesize $x_2$
\put(8.0000,-23.5000){\makebox(0,0){\footnotesize $x_2$}}%
\end{picture}}%}
	%    	\caption{pro2.2}
	\label{均衡点 of 2vendor}
\end{center}\vspace{5mm}

\begin{proof}
	
\begin{enumerate}
	\renewcommand{\labelenumi}{(\roman{enumi})}
	\item $x_1\neq x_2$\\
We notice that $f_1({\bm x})=f_2({\bm x})=\dfrac{1}{2}$ in this case.

	\item $x_1=x_2$
	
%\begin{figure}[ht]
%	\includegraphics[width=10cm]{2vendor.eps}
%	\caption{$2$ vendors}
%	\label{$2$ vendors}
%\end{figure}

\begin{center}
	\scalebox{1.0}{%WinTpicVersion4.32a
{\unitlength 0.1in%
\begin{picture}(20.0000,24.6500)(8.0000,-28.0000)%
% CIRCLE 2 0 3 0 Black White  
% 4 1800 1800 800 1800 800 1800 800 1800
% 
\special{pn 8}%
\special{ar 1800 1800 1000 1000 0.0000000 6.2831853}%
% LINE 1 0 3 0 Black White  
% 2 1800 700 1800 900
% 
\special{pn 13}%
\special{pa 1800 700}%
\special{pa 1800 900}%
\special{fp}%
% STR 2 0 3 0 Black White  
% 4 1800 500 1800 600 5 0 0 0
% \footnotesize $0(=1)$
\put(18.0000,-6.0000){\makebox(0,0){\footnotesize $0(=1)$}}%
% STR 2 0 3 0 Black White  
% 4 1800 300 1800 400 5 0 0 0
% \footnotesize $x_1=x_2$
\put(18.0000,-4.0000){\makebox(0,0){\footnotesize $x_1=x_2$}}%
% DOT 0 0 3 0 Black White  
% 1 1800 800
% 
\special{pn 4}%
\special{sh 1}%
\special{ar 1800 800 16 16 0 6.2831853}%
\end{picture}}%}
	%    	\caption{pro2.2}
	\label{均衡点 of 2vendor}
\end{center}\vspace{5mm}
We notice that $f_1({\bm x})=f_2({\bm x})=\dfrac{1}{2}$ in this case.

\end{enumerate}
It follows that $f_1({\bm x})=f_2({\bm x})=\dfrac{1}{2}$ for every ${\bm x}$. 
As a result, we showed that $f_k({\bm x})\geq f_k(x_k\rightarrow [0,1])$\quad($k=1,2$).
Forthermore, it holds that $|I_1|+|I_2|\geq|I_1|$ and $|I_1|+|I_2|\geq|I_2|$. 
Therefore, we can prove \eqref{eqn:main1}.

\end{proof}

\section*{Proof of Theorem \ref{thm:main} {\rm (II)}}
Finally, we are concerned with the case where $n\geq3$.
\begin{proof}
	First,  we show that \eqref{eqn:main1} is a sufficient condition for equilibrium. \\
	We recall that we defined $|\bar{I}|=\displaystyle\max_{l\in{\{1,2,\cdots,n\}}}|I_l|$.
	
%	まずはf(x)の値を調べる
	We show that $f_k({\bm x})\geq \dfrac{1}{2}|\bar{I}|$.
		\begin{enumerate}
		\renewcommand{\labelenumi}{(\roman{enumi})}
		\item $x_{k-1}\neq x_k$ and $x_k\neq x_{k+1}$\\
		      We notice that $f_k({\bm x})=\dfrac12(|I_{k-1}|+|I_k|)\geq\dfrac12|\bar{I}|$.
		      
		      \begin{center}
		      	\scalebox{1.0}{%WinTpicVersion4.32a
{\unitlength 0.1in%
\begin{picture}(27.9000,22.7000)(-1.9000,-26.0500)%
% STR 2 0 3 0 Black White  
% 4 1600 1500 1600 1600 2 0 0 0
% 
\put(16.0000,-16.0000){\makebox(0,0)[lb]{}}%
% CIRCLE 2 0 3 0 Black White  
% 4 1600 1600 600 1600 600 1600 600 1600
% 
\special{pn 8}%
\special{ar 1600 1600 1000 1000 0.0000000 6.2831853}%
% LINE 1 0 3 0 Black White  
% 2 1600 500 1600 700
% 
\special{pn 13}%
\special{pa 1600 500}%
\special{pa 1600 700}%
\special{fp}%
% STR 2 0 3 0 Black White  
% 4 1600 300 1600 400 5 0 0 0
% \footnotesize $0(=1)$
\put(16.0000,-4.0000){\makebox(0,0){\footnotesize $0(=1)$}}%
% DOT 0 0 3 0 Black White  
% 3 600 1600 1000 2400 2000 2520
% 
\special{pn 4}%
\special{sh 1}%
\special{ar 600 1600 16 16 0 6.2831853}%
\special{sh 1}%
\special{ar 1000 2400 16 16 0 6.2831853}%
\special{sh 1}%
\special{ar 2000 2520 16 16 0 6.2831853}%
% STR 2 0 3 0 Black White  
% 4 400 1500 400 1600 5 0 0 0
% \footnotesize $x_{k-1}$
\put(4.0000,-16.0000){\makebox(0,0){\footnotesize $x_{k-1}$}}%
% STR 2 0 3 0 Black White  
% 4 1000 2450 1000 2550 5 0 0 0
% \footnotesize $x_k$
\put(10.0000,-25.5000){\makebox(0,0){\footnotesize $x_k$}}%
% STR 2 0 3 0 Black White  
% 4 2000 2570 2000 2670 5 0 0 0
% \footnotesize $x_{k+1}$
\put(20.0000,-26.7000){\makebox(0,0){\footnotesize $x_{k+1}$}}%
\end{picture}}%}
		      	%    	\caption{pro2.2}
		      	\label{均衡点 of $n$vendor 1}
		      \end{center}\vspace{5mm}
		      
		\item $x_{k-1}=x_k$\quad(resp. $x_k=x_{k+1}$)\\
		\begin{center}
			\scalebox{1.0}{%WinTpicVersion4.32a
{\unitlength 0.1in%
\begin{picture}(27.9000,22.7000)(-1.9000,-26.0500)%
% STR 2 0 3 0 Black White  
% 4 1600 1500 1600 1600 2 0 0 0
% 
\put(16.0000,-16.0000){\makebox(0,0)[lb]{}}%
% CIRCLE 2 0 3 0 Black White  
% 4 1600 1600 600 1600 600 1600 600 1600
% 
\special{pn 8}%
\special{ar 1600 1600 1000 1000 0.0000000 6.2831853}%
% LINE 1 0 3 0 Black White  
% 2 1600 500 1600 700
% 
\special{pn 13}%
\special{pa 1600 500}%
\special{pa 1600 700}%
\special{fp}%
% STR 2 0 3 0 Black White  
% 4 1600 300 1600 400 5 0 0 0
% \footnotesize $0(=1)$
\put(16.0000,-4.0000){\makebox(0,0){\footnotesize $0(=1)$}}%
% DOT 0 0 3 0 Black White  
% 3 600 1600 1000 2400 2000 2520
% 
\special{pn 4}%
\special{sh 1}%
\special{ar 600 1600 16 16 0 6.2831853}%
\special{sh 1}%
\special{ar 1000 2400 16 16 0 6.2831853}%
\special{sh 1}%
\special{ar 2000 2520 16 16 0 6.2831853}%
% STR 2 0 3 0 Black White  
% 4 400 1500 400 1600 5 0 0 0
% \footnotesize $x_{k-2}$
\put(4.0000,-16.0000){\makebox(0,0){\footnotesize $x_{k-2}$}}%
% STR 2 0 3 0 Black White  
% 4 970 2480 970 2580 5 0 0 0
% \footnotesize $x_{k-1}=x_k$
\put(9.7000,-25.8000){\makebox(0,0){\footnotesize $x_{k-1}=x_k$}}%
% STR 2 0 3 0 Black White  
% 4 2000 2570 2000 2670 5 0 0 0
% \footnotesize $x_{k+1}$
\put(20.0000,-26.7000){\makebox(0,0){\footnotesize $x_{k+1}$}}%
\end{picture}}%}
			%    	\caption{pro2.2}
			\label{均衡点 of $n$vendor 2}
		\end{center}\vspace{5mm}
		      If $|I_{k-1}|=0$, since $|I_{k-2}|+|I_{k-1}|\geq|\bar{I}|$ and $|I_{k-1}|+|I_k|\geq|\bar{I}|$,   we have $|I_{k-2}|=|I_k|=|\bar{I}|$.
		      (resp. If $|I_k|=0$,  since $|I_{k-1}|+|I_k|\geq|\bar{I}|$ and $|I_k|+|I_{k+1}|\geq|\bar{I}|$,  we have $|I_{k-1}|=|I_{k+1}|=|\bar{I}|$.) 
		      Therefore, we obtain $f_k({\bm x})=\dfrac{1}{2\cdot2}(|\bar{I}|+|\bar{I}|)=\dfrac{1}{2}|\bar{I}|$ in this case.
    	\end{enumerate}
		From (i) and (ii), we conclude that $f_k({\bm x})\geq \dfrac{1}{2}|\bar{I}|$.

\newpage
    Next, we show that $f_k({\bm x})\geq f_k(x_k\rightarrow[0,1])$.
	\begin{enumerate}
		\renewcommand{\labelenumi}{(\roman{enumi})}
		\item $x_k\rightarrow(x_{k-1},x_{k+1})$\quad($1\leq k\leq n$)
		\begin{enumerate}
			\renewcommand{\labelenumii}{(\alph{enumii})}
			\item $x_{k-1}\neq x_k$ and $x_k\neq x_{k+1}$\\
			We have $f_k(x_k\rightarrow(x_{k-1},x_{k+1}))=f_k({\bm x})$.
			\item $x_{k-1}=x_k$ or $x_k=x_{k+1}$\\
			We have $f_k(x_k\rightarrow(x_{k-1},x_{k+1}))\leq \dfrac12|\bar{I}|\leq f_k({\bm x})$.
			
		\end{enumerate}
		\item $x_k\rightarrow(x_l,x_{l+1})$\quad($x_l<x_{l+1}$ and $l\neq k-1,k$ and $1\leq l\leq n$)\\
		We have $f_k(x_k\rightarrow(x_l,x_{l+1}))=\dfrac12|I_l|\leq \dfrac12|\bar{I}|\leq f_k({\bm x})$.
		\item $x_k\rightarrow\{x_l\}$\quad($1\leq k\leq n$ and $1\leq l \leq n$)
		\begin{enumerate}
			\renewcommand{\labelenumii}{(\alph{enumii})}
			\item $l=k$\\
			It clearly holds that $f_k(x_k\rightarrow\{x_k\})=f_k({\bm x})$.
			
			%\item 移動した先が元いた位置の隣 かつ 元いた位置に店が存在しないとき
			\item The case where only vendor $k$ occupies at $x_k$ and moves next to $x_k$\vspace*{0.5ex}\\
		%	\begin{figure}[ht]
		%		\includegraphics[width=10cm]{lem24-1.eps}
		%		\caption{Lemma \ref{lem:3} (iii) (a) (3)}
		%		\label{lem24-1}
		%	\end{figure}
		\begin{center}
			\scalebox{1.0}{%WinTpicVersion4.32a
{\unitlength 0.1in%
\begin{picture}(27.2500,24.0000)(-1.2500,-27.3500)%
% STR 2 0 3 0 Black White  
% 4 1600 1500 1600 1600 2 0 0 0
% 
\put(16.0000,-16.0000){\makebox(0,0)[lb]{}}%
% CIRCLE 2 0 3 0 Black White  
% 4 1600 1600 600 1600 600 1600 600 1600
% 
\special{pn 8}%
\special{ar 1600 1600 1000 1000 0.0000000 6.2831853}%
% LINE 1 0 3 0 Black White  
% 2 1600 500 1600 700
% 
\special{pn 13}%
\special{pa 1600 500}%
\special{pa 1600 700}%
\special{fp}%
% STR 2 0 3 0 Black White  
% 4 1600 300 1600 400 5 0 0 0
% \footnotesize $0(=1)$
\put(16.0000,-4.0000){\makebox(0,0){\footnotesize $0(=1)$}}%
% DOT 0 0 3 0 Black White  
% 3 600 1600 1000 2400 2000 2520
% 
\special{pn 4}%
\special{sh 1}%
\special{ar 600 1600 16 16 0 6.2831853}%
\special{sh 1}%
\special{ar 1000 2400 16 16 0 6.2831853}%
\special{sh 1}%
\special{ar 2000 2520 16 16 0 6.2831853}%
% STR 2 0 3 0 Black White  
% 4 400 1500 400 1600 5 0 0 0
% \footnotesize $x_{l}$
\put(3.7000,-15.0000){\makebox(0,0){\footnotesize ($x_{k-2}$)}}%
\put(3.7000,-17.0000){\makebox(0,0){\footnotesize ($x_{k-3}$)}}%
% STR 2 0 3 0 Black White  
% 4 970 2480 970 2580 5 0 0 0
% \footnotesize $x_{k-1}$
\put(10.0000,-25.8000){\makebox(0,0){\footnotesize $x_{k-1}$}}%
% STR 2 0 3 0 Black White  
% 4 2000 2570 2000 2670 5 0 0 0
% \footnotesize $x_{k}$
\put(20.0000,-26.7000){\makebox(0,0){\footnotesize \sout{$x_{k}$}}}%
% DOT 0 0 3 0 Black White  
% 1 2400 2200
% 
\special{pn 4}%
\special{sh 1}%
\special{ar 2400 2200 16 16 0 6.2831853}%
% STR 2 0 3 0 Black White  
% 4 2400 2250 2400 2350 5 0 0 0
% \footnotesize $x_{m}$
\put(24.5000,-23.5000){\makebox(0,0){\footnotesize $x_{k+1}$}}%
% STR 2 0 3 0 Black White  
% 4 1000 2600 1000 2700 5 0 0 0
% \footnotesize $\vdots$
\put(10.0000,-27.0000){\makebox(0,0){\footnotesize ($x_{k-2}$)}}%
% STR 2 0 3 0 Black White  
% 4 1000 2700 1000 2800 5 0 0 0
% \footnotesize $x_k'$
\put(10.0000,-28.4000){\makebox(0,0){\footnotesize $x_k'$}}%
\end{picture}}%}
			%    	\caption{pro2.2}
			\label{均衡点 of $n$vendor 3}
		\end{center}\vspace{5mm}
			In this case, we deduce that $f_k(x_k\rightarrow\{x_l\})\leq\dfrac{1}{2\cdot2}(|\bar{I}|+2f_k({\bm x}))\leq\dfrac{1}{4}(2f_k({\bm x})+2f_k({\bm x}))=f_k({\bm x})$.

			%\item 移動した先が元いた位置の隣 かつ 元いた位置に別の店がいるとき、または、移動した先が元にいた位置の隣でないとき。
			\item The case where more than two vendors occupy at $x_k$ and vendor $k$ moves next to $x_k$, or 
			vendor $k$ moves to a location that is not next to $x_k$\vspace*{0.5ex}\\
		   In this case, we deduce that $f_k(x_k\rightarrow\{x_l\})\leq\dfrac{1}{2\cdot2}(|\bar{I}|+|\bar{I}|)=\dfrac12|\bar{I}|\leq f_k({\bm x})$.
		\end{enumerate}
	\end{enumerate}
	From (i)--(iii), we conclude that $f_k({\bm x})\geq f_k(x_k\rightarrow[0,1])$. Therefore we have showed that (1.1) is a sufficient condition for equilibrium.

	Next, we show that \eqref{eqn:main1} is a necessary condition for equilibrium. Therefore, we prove that if \eqref{eqn:main1} does not hold, then ${\bm x}$ is not equilibrium. 
	That is, assuming   there exists $k$ such that  $|I_k|+|I_{k+1}|<|\bar{I}|$, we prove that ${\bm x}$ is not equilibrium.
	Observing Proposition \ref{pro:1}, it suffices to consider the case where only one  vendor occupies a location.
	
	\begin{enumerate}
		\renewcommand{\labelenumi}{(\roman{enumi})}
		\item $x_k\neq x_{k+1}$ and $x_{k+1}\neq x_{k+2}$\\
		We have $f_{k+1}({\bm x})=\dfrac12(|I_k|+|I_{k+1}|)<\dfrac12|\bar{I}|$.
		Let $(x_j,x_{j+1})$ be a maximum open interval such that $|\bar{I}|=|(x_j,x_{j+1})|$.
		If $x_{k+1}\rightarrow(x_j,x_{j+1})$, we have $f_{k+1}(x_{k+1}\rightarrow(x_j,x_{j+1}))=\dfrac12|\bar{I}|>f_{k+1}({\bm x})$.\\
		Thus ${\bm x}$ with this case is not equilibrium.

		\item $x_k=x_{k+1}$\quad(resp. $x_{k-1}=x_k$)\\
		Since $|I_k|+|I_{k+1}|<|\bar{I}|$ and $|I_k|=0$, we drive $|I_{k+1}|<|\bar{I}|$. Forthermore, we notice $|I_{k-1}|\leq|\bar{I}|$.
		Thus we deduce that $f_k({\bm x})=\dfrac{1}{2\cdot2}(|I_{k-1}|+|I_{k+1}|)<\dfrac{1}{2\cdot2}(|\bar{I}|+|\bar{I}|)=\dfrac12|\bar{I}|$. Let $(x_j,x_{j+1})$ be a maximum open interval such that $|\bar{I}|=|(x_j,x_{j+1})|$.
		If $x_k\rightarrow(x_j,x_{j+1})$, we have $f_k(x_k\rightarrow(x_j,x_{j+1}))=\dfrac12|\bar{I}|>f_k({\bm x})$.
		Thus ${\bm x}$  is not equilibrium in this case.
		
	\end{enumerate}
	From (i)--(ii), we have showed that \eqref{eqn:main1} is a necessary condition for equilibrium.
	
	Therefore, we can complete the proof of Theorem \ref{thm:main}.
\end{proof}

%\section*{Acknowledgements} 
%The authors would like to thank Prof. Suzuki for his kind help and comments.

\end{document}